\title{On the Well Extension of Partial Well Orderings}
\author{Haoxiang Lin}
\chardef\bslash=`\\ 
\theoremstyle{definition}
\theoremstyle{remark}
\newcommand{\eval}[2][\right]{\relax
  \ifx#1\right\relax \left.\fi#2#1\rvert}
\theoremstyle{definition}
\newtheorem{theorem}{Theorem}[section]
\newtheorem{lemma}[theorem]{Lemma}
\theoremstyle{definition}
\newtheorem{definition}[theorem]{Definition}
\theoremstyle{remark}
\numberwithin{equation}{section}
\begin{document}
\date{}
\maketitle

\newcommand{\relarank}{\text{RK}}
\newcommand{\revrelarank}{\textit{M}}
\newcommand{\getwell}{\text{GW}}
\newcommand{\getleast}{\text{GL}}
\newcommand{\extendwell}{\text{F}}
\newcommand{\increasingsequence}{\text{H}}
\newcommand{\range}{\text{ran$\,$}}
\newcommand{\field}{\text{fld$\,$}}
\newcommand{\getchoicea}{\text{G}_{1}}
\newcommand{\getdescendant}{\text{GD}}
\newcommand{\extendlinear}{\text{J}}
\newcommand{\descendingchain}{\textit{g}}
\newcommand{\rangerelarank}{\lambda}

\begin{abstract}\label{sec:abs}
In this paper, we study the well extension of strict(irreflective) partial well orderings.
We first prove that any partially well-ordered structure $\langle A, R \rangle$ can be extended to a well-ordered one.
Then we prove that every linear extension of $\langle A, R \rangle$ is well-ordered
if and only if $A$ has no infinite totally unordered subset under $R$.
\end{abstract}

\section{Introduction}\label{sec:intro}

The partial well ordering is a partial ordering which additionally reveals element minimality.
Such a concept is the natural extension of well ordering.
In the study of partial orderings, we first choose either strict(irreflective) or non-strict(reflective) orderings as the basis.
For the non-strict case, we no longer need to specify the set on which the partial ordering is defined.
This is because whenever $R$ is a partial ordering defined on a set $A$, then $A = {\field} R$.
Strict partial orderings lose this advantage, however the whole class of partial orderings is significantly enlarged.

By Order-Extension Principle~\cite{Szpilrajn30}, any partial ordering can be linearly extended.
Similarly, E. S. Wolk proved that
\textit{a non-strict partial ordering $R$ defined on $A$ is a non-strict partial well ordering iff
	every linear extension of $R$ is a well ordering of $A$}~\cite{Wolk1967}.
However, this result does not apply to strict partial well orderings any more.
Take $\langle \mathbb{Z}, \varnothing \rangle$ as an example in which $\mathbb{Z}$ is the set of integers.
Let $<_{\mathbb{Z}}$ be the normal ordering of $\mathbb{Z}$.
Clearly $\varnothing$ is a strict partial well ordering(refer to later definition \ref{def_partial_well}),
however $<_{\mathbb{Z}}$ is a linear extension of $\varnothing$ but not a well ordering.
The reason is that $\varnothing$ is no a legal non-strict partial well ordering at all.

In this paper, we study the well extension of strict partial well orderings
which are largely ignored by previous research work
(~\cite{Higman1952}, ~\cite{Kruskal1960}, ~\cite{Michael1960}, ~\cite{Nash1963}, ~\cite{Nash1964}, ~\cite{Rado1954}, ~\cite{Tarkowski1960}, ~\cite{Wolk1967}).
In the sequel, when we talk about partial or partial well orderings without special emphasis,
we assume that they are strict.
First we show the result that any partially well-ordered structure $\langle A, R \rangle$ can be well extended.
Such a result also applies to a well-founded structure because the well-founded relation can be easily extended to a partial well ordering.
Then we prove that every linear extension of $\langle A, R \rangle$ is well-ordered
if and only if $A$ has no infinite totally unordered subset under $R$.

Given a structure $\langle A, R \rangle$ where $R$ is a binary relation on $A$, we define the following notions:

\begin{definition}
$t \in A$ is said to be an \textit{R-minimal} element of A iff there is no $x \in A$ for which $x\,R\,t$.
\end{definition}

\begin{definition}\label{def_well_founded}
$R$ is said to be \textit{well founded} iff every nonempty subset of $A$ has an $R$-minimal element.
\end{definition}

\begin{definition}\label{def_partial_well}
$R$ is called a \textit{partial well ordering} if it is a transitive well-founded relation.
\end{definition}

A partial well ordering by the above definition \ref{def_partial_well} is strict
because any well-founded relation is irreflexive
otherwise if $x\,R\,x$ then the set $\{x\}$ has no $R$-minimal element.

The following lemma is well known, and we therefore omit its proof.
\begin{lemma}\label{partial_well_property}
	The following properties of a partially ordered structure $\langle A, R \rangle$ are equivalent.
	\leavevmode
	\begin{enumerate}[(a)]
		\item
		$\langle A, R \rangle$ is partially well-ordered.
		
		\item
		There is no function $f$ with domain $\upomega$ and range $A$ such that $f(n^{+}) \, R \, f(n)$ for each $n \in \upomega$
		($f$ or the sequence $\langle f(0), f(1), \cdots, f(n), \cdots \rangle$ is sometimes called a \textit{descending chain}).
	\end{enumerate}
\end{lemma}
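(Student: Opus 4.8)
The plan is to prove the biconditional by contraposition in both directions, turning the equivalence into the two statements ``a descending chain yields a subset with no $R$-minimal element'' and ``a subset with no $R$-minimal element yields a descending chain.''

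First I would handle (a)$\,\Rightarrow\,$(b), which is the elementary direction. Arguing contrapositively, suppose (b) fails, so there is an $f$ with domain $\upomega$ and range contained in $A$ satisfying $f(n^{+})\,R\,f(n)$ for every $n\in\upomega$. Put $B=\range f$, a nonempty subset of $A$. For any $b\in B$, write $b=f(n)$; then $f(n^{+})\in B$ and $f(n^{+})\,R\,f(n)=b$, so $b$ is not $R$-minimal in $B$. Hence $B$ has no $R$-minimal element, and $\langle A,R\rangle$ is not partially well-ordered. This step uses neither transitivity nor any form of choice.

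The substantive direction is (b)$\,\Rightarrow\,$(a), again by contraposition. Suppose $\langle A,R\rangle$ is not partially well-ordered, so some nonempty $B\subseteq A$ has no $R$-minimal element. The idea is to build a descending chain by recursion on $\upomega$: choose $x_{0}\in B$ arbitrarily, and given $x_{n}\in B$, use the fact that $x_{n}$ is not $R$-minimal in $B$ to select some $x_{n^{+}}\in B$ with $x_{n^{+}}\,R\,x_{n}$; the map $n\mapsto x_{n}$ is then the required $f$ witnessing the failure of (b).

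The main obstacle is justifying this infinite recursion rigorously, since the choice of $x_{n^{+}}$ depends on $x_{n}$ and the set $\{\,y\in B : y\,R\,x_{n}\,\}$ from which it is drawn changes at every stage, so ordinary recursion does not by itself suffice. The cleanest remedy is to fix in advance, by the Axiom of Choice, a function $c$ with $c(S)\in S$ for each nonempty $S\subseteq B$, and then define $f(0)=x_{0}$ and $f(n^{+})=c(\{\,y\in B : y\,R\,f(n)\,\})$ by genuine recursion on $\upomega$; equivalently one may invoke the Axiom of Dependent Choice directly. I would make this dependence explicit, as it is the only non-trivial ingredient in the lemma. Note finally that transitivity of $R$ is never used in either direction and may be dropped from the hypotheses.
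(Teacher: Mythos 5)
Your proposal is correct, and since the paper explicitly omits the proof of this lemma as ``well known,'' there is nothing to diverge from: your argument --- (a)$\Rightarrow$(b) by observing that the range of a descending chain is a nonempty subset with no $R$-minimal element, and (b)$\Rightarrow$(a) by fixing a choice function in advance and then running an ordinary recursion on $\upomega$ (equivalently, invoking Dependent Choice) --- is precisely the standard textbook proof the paper alludes to, and your isolation of where choice enters is the right emphasis. Your side remarks are also accurate: transitivity plays no role, so the equivalence holds for arbitrary well-founded relations (which is in fact how the paper later uses it), and you correctly read ``range $A$'' in (b) as ``range contained in $A$,'' since under the literal onto reading the implication (b)$\Rightarrow$(a) would fail for any uncountable $A$ containing a descending chain.
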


We say that two elements $x$ and $y$ are \textit{incomparable}
if and only if $x \neq y$, $\neg (x R y)$  and $\neg (y R x)$.
A subset $B$ of $A$ is \textit{totally unordered} if and only if any two distinct elements of $B$ are incomparable.
To be noted, $A$ can have any arbitrarily large totally unordered subset.
This is a fundamental difference from those non-strict partial well orderings
in that only finite totally unordered subsets exist.
Clearly if $B \not \subseteq {\field} R$, then any $t$ in $B - {\field} R$ is an $R$-minimal element.

\section{${\revrelarank}$-decomposition}\label{sec:decomposition}

We construct a useful canonical decomposition of $A$
by elements' \textit{relative ranks} under $R$ using transfinite recursion.
Such decomposition helps in later proofs.

To be more precise, let $R$-rank be denoted as {\relarank},
then {\relarank} is a function for which ${\relarank}(t) = \{{\relarank}(x) \ | \ x\,R\,t\}$.
{\relarank} is defined by the transfinite recursion theorem schema on well-founded structures.
Take $\gamma_{1}(f, t, z)$ to be the formula $z$ = ran$\ f$.
If $\gamma_{1}(f, y_{1})$ and $\gamma_{1}(f, y_{2})$,
it is obvious that $y_{1} = y_{2}$.
Then there exists a unique function {\relarank} on $A$ for which
\begin{align*}
{\relarank}(t) &= {\range}({\relarank} \upharpoonright \{x \in A \ | \ x\,R\,t\})\\
&= {\relarank} \lsem\{x \ | \ x\,R\,t\} \rsem\\
&= \{{\relarank}(x) \ | \ x\,R\,t\}
\end{align*}

{\relarank} is similar to the "$\epsilon$-image" of well-ordered structures, and has the following properties:
\begin{lemma}\label{relarank_property}
	\leavevmode
	\begin{enumerate}[(a)]
		\item
		For any $x$ and $y$ in $A$,
		\begin{align*}
		x\,R\,y \ &\Rightarrow \ {\relarank}(x) \in {\relarank}(y)\\
		{\relarank}(x) \in {\relarank}(y) \ &\Rightarrow \ \exists z \in A \text{ with } {\relarank}(z) = {\relarank}(x) \text{ and } z\,R\,y
		\end{align*}
		
		\item
		${\relarank}(t) \notin {\relarank}(t)$ for any $t \in A$.
		
		\item
		${\relarank}(t)$ is an ordinal for any $t \in A$.
		
		\item
		{\range}{\relarank} is an ordinal.
	\end{enumerate}
\end{lemma}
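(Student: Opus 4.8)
The plan is to derive all four clauses from the recursive equation $\relarank(t)=\{\relarank(x)\mid x\,R\,t\}$ together with well-founded induction on $\langle A,R\rangle$, whose validity is guaranteed by \lemref{partial_well_property} (the nonexistence of descending chains). Clause (a) needs no induction. Its first implication is immediate, since $x\,R\,y$ places $\relarank(x)$ inside the defining set of $\relarank(y)$. For the second, if $\relarank(x)\in\relarank(y)$ then, by the very shape of $\relarank(y)$, there is some $z$ with $z\,R\,y$ and $\relarank(z)=\relarank(x)$, and this $z$ is exactly the required witness.

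I regard clause (c) as the heart of the lemma and the main obstacle, and I would prove it by well-founded induction. Fix $t$ and assume $\relarank(x)$ is an ordinal for every $x$ with $x\,R\,t$; then $\relarank(t)$ is a set of ordinals, and it remains only to verify that it is transitive, since a transitive set of ordinals is itself an ordinal. Suppose $\alpha\in\beta\in\relarank(t)$. Writing $\beta=\relarank(x)$ with $x\,R\,t$, the membership $\alpha\in\relarank(x)$ combined with the second implication of (a) produces $z$ with $z\,R\,x$ and $\relarank(z)=\alpha$. This is precisely where transitivity of $R$ (Definition~\ref{def_partial_well}) is indispensable: from $z\,R\,x$ and $x\,R\,t$ we obtain $z\,R\,t$, whence $\alpha=\relarank(z)\in\relarank(t)$. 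Indeed the conclusion genuinely fails without transitivity, since for the chain $c\,R\,b\,R\,a$ with $c\,R\,a$ absent one computes $\relarank(a)=\{\{0\}\}$, which is not an ordinal; so I expect the careful use of transitivity to be the crux.

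Clauses (b) and (d) should then follow quickly. Clause (b) is a corollary of (c): since $\relarank(t)$ is an ordinal, $\relarank(t)\notin\relarank(t)$ holds simply because no ordinal is a member of itself. Finally, for (d) I would show that ${\range}{\relarank}=\{\relarank(t)\mid t\in A\}$ is a transitive set of ordinals, hence an ordinal. It is a set of ordinals by (c); for transitivity, whenever $\alpha\in\relarank(t)$ for some $t\in A$, the second implication of (a) supplies $z\in A$ with $\relarank(z)=\alpha$, so that $\alpha\in{\range}{\relarank}$. Thus ${\range}{\relarank}$ is transitive, which completes the plan.
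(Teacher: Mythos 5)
Your proposal is correct, and for clauses (a), (c), and (d) it follows the paper's proof essentially verbatim: (a) is read off the recursion equation, (c) is an induction along $R$ showing ${\relarank}(t)$ is a transitive set of ordinals---with transitivity of $R$ used exactly where you use it, to push the witness $z$ with $z\,R\,x$ and $x\,R\,t$ down to $z\,R\,t$---and (d) exhibits ${\range}{\relarank}$ as a transitive set of ordinals in the same way. The one genuine divergence is clause (b): you deduce it from (c) via the fact that no ordinal is a member of itself, whereas the paper proves (b) directly and independently of (c), by taking an $R$-minimal element $\hat{t}$ of the set $S=\{t\in A \mid {\relarank}(t)\in{\relarank}(t)\}$ and using the second implication of (a) to produce some $x\,R\,\hat{t}$ with $x\in S$, contradicting minimality. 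Your shortcut is perfectly valid in context, but note what each route buys: the paper's argument for (b) uses only well-foundedness, so it survives for arbitrary non-transitive well-founded relations, where---as your own counterexample $c\,R\,b\,R\,a$ shows---clause (c), and hence your derivation of (b), fails; your route is more economical and exposes (b) as redundant once (c) is in hand (the paper never uses (b) in proving (c) or (d)). One small quibble: the induction you run is licensed directly by Definition~\ref{def_well_founded} (the set of counterexamples, if nonempty, would have an $R$-minimal element), not by \lemref{partial_well_property}; the descending-chain characterization is an equivalent reformulation whose harder direction requires dependent choice, so citing it as the justification for well-founded induction gets the logical dependence backwards.
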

\begin{proof}
	\leavevmode
	
	\begin{enumerate}[(a)]
		\item
		By definition.
		
		\item
		Let $S$ be the set of counterexamples:
		\begin{align*}
		S = \{t \in A \ | \ {\relarank}(t) \in {\relarank}(t)\}
		\end{align*}
		If $S$ is nonempty, it has a minimal $\hat{t}$ under $R$.
		Since ${\relarank}(\hat{t}) \in {\relarank}(\hat{t})$, there is some $x\,R\,\hat{t}$ with ${\relarank}(x) = {\relarank}(\hat{t})$
		by (a).
		But then ${\relarank}(x) \in {\relarank}(x)$ and $x \in S$, contradicting the fact that $\hat{t}$ is minimal in $S$.
		
		\item
		Let
		\begin{align*}
		B = \{t \in A \ | \ {\relarank}(t) \text{ is an ordinal}\}
		\end{align*}
		We use Transfinite Induction Principle to prove that $B = A$.
		For a minimal element $\hat{t} \in A$ under $R$, ${\relarank}(\hat{t}) = \varnothing$ which is an ordinal.
		So $\hat{t} \in B$, and $B$ is not empty.
		Assume seg $t = \{x \in A \ | \ x\,R\,t\} \subseteq B$,
		then ${\relarank}(t) = \{{\relarank}(x) \ | \ x\,R\,t \}$ is a set of ordinals by assumption.
		If $u \in v \in {\relarank}(t)$, there exist $y, z$ in $A$
		with $u = {\relarank}(y), v = {\relarank}(z), y\,R\,z$ and $z\,R\,t$.
		Because $R$ is a transitive relation, then $z\,R\,t$ and $u \in {\relarank}(t)$.
		${\relarank}(t)$ is a transitive set of ordinals,
		which implies that it is an ordinal and $t \in B$.
		
		\item
		If $u \in {\relarank}(t) \in {\range}{\relarank}$, then there is some $x\,R\,t$ with $u = {\relarank}(x)$;
		consequently $u \in {\range}{\relarank}$.
		
		Then {\range}{\relarank} is a transitive set of ordinals, therefore itself is an ordinal too.
	\end{enumerate}
\end{proof}

In the sequel, {\range}{\relarank} will be denoted as ${\rangerelarank}$.
To be noted, {\relarank} is not a homomorphism of $A$ onto ${\rangerelarank}$.
We next define
\begin{align*}
{\revrelarank} = \{\langle \alpha, B \rangle \ | \ (\alpha \in {\rangerelarank}) \land (B \subseteq A) \land (x \in B \Leftrightarrow {\relarank}(x) = \alpha)\}
\end{align*}
${\revrelarank}$ is a function from ${\rangerelarank}$ into $\mathcal{P}(A)$,
because it is a subset of ${\rangerelarank} \times \mathcal{P}(A)$ and is single rooted.
Let ${\revrelarank}_{\alpha} = {\revrelarank}(\alpha)$ for $\alpha \in {\rangerelarank}$,
then it is not hard to confirm that ${\revrelarank}_{\alpha}$ is a non-empty set
and ${\revrelarank}  \lsem {\rangerelarank} \rsem = \{ {\revrelarank}_{\alpha} \ | \ \alpha \in {\rangerelarank} \}$
is a partition of set $A$ which will be referred to as the ${\revrelarank}$-\textit{decomposition}.
By lemma \ref{relarank_property},
each ${\revrelarank}_{\alpha}$ is a totally unordered subset of $A$ under $R$.

\section{Well Extension}\label{sec:proof}

In this section, we prove that:
\begin{theorem}\label{partial_well_to_well}
	Any partially well-ordered structure $\langle A, R \rangle$ can be extended to a well-ordered structure $\langle A, W \rangle$ in which $R \subseteq W$.
\end{theorem}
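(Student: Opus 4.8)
The plan is to exploit the ${\revrelarank}$-decomposition of $A$ together with the Well-Ordering Theorem. The rank function ${\relarank}$ already behaves like a strictly $R$-increasing map into the ordinal ${\rangerelarank}$: by \lemref{relarank_property}(a), $x\,R\,y$ forces ${\relarank}(x) \in {\relarank}(y)$, i.e.\ ${\relarank}(x) < {\relarank}(y)$ as ordinals. Thus the only information $R$ fails to supply is how to order the elements sitting inside a single layer ${\revrelarank}_{\alpha}$, which by construction is a totally unordered set under $R$. I would supply that missing information externally and then glue the layers together lexicographically by rank.

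Concretely, I would first invoke the Well-Ordering Theorem to fix, for each $\alpha \in {\rangerelarank}$, a well-ordering $<_{\alpha}$ of the layer ${\revrelarank}_{\alpha}$; choosing such a family $\langle\, <_{\alpha} : \alpha \in {\rangerelarank}\,\rangle$ simultaneously is exactly where the axiom of choice enters, since the layers may be arbitrarily large. Then I would define $W$ by
\begin{align*}
x\,W\,y \ \Leftrightarrow \ {\relarank}(x) < {\relarank}(y) \ \lor \ \bigl({\relarank}(x) = {\relarank}(y) \ \land \ x <_{{\relarank}(x)} y\bigr).
\end{align*}
In words, $W$ orders $A$ first by relative rank and, within a common rank, by the chosen well-ordering of that layer.

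The verification then splits into three parts. That $W$ is a strict linear ordering (irreflexive, transitive, and satisfying trichotomy) follows from the lexicographic shape: two distinct elements either have distinct ranks, in which case they are ordered by the ordinal comparison, or lie in a common layer ${\revrelarank}_{\alpha}$, in which case $<_{\alpha}$ decides them. That $R \subseteq W$ is immediate from \lemref{relarank_property}(a): $x\,R\,y$ gives ${\relarank}(x) < {\relarank}(y)$, hence $x\,W\,y$.

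The crux --- though still short --- is well-foundedness. Given a nonempty $S \subseteq A$, the image $\{{\relarank}(x) \ | \ x \in S\}$ is a nonempty set of ordinals and so has a least element $\beta$; the set $S \cap {\revrelarank}_{\beta}$ is then nonempty and, being well-ordered by $<_{\beta}$, has a $<_{\beta}$-least element $m$. I would argue that $m$ is $W$-least in $S$: any $s \in S$ with ${\relarank}(s) > \beta$ satisfies $m\,W\,s$, while any $s \in S$ with ${\relarank}(s) = \beta$ lies in $S \cap {\revrelarank}_{\beta}$ and hence satisfies $m = s$ or $m <_{\beta} s$. Since a strict linear ordering in which every nonempty subset has a least element is a well-ordering, this completes the argument. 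The only genuine difficulty is the appeal to choice for the simultaneous well-orderings of the layers; everything else is bookkeeping on the rank decomposition.
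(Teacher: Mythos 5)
Your proposal is correct and takes essentially the same route as the paper: the same ${\revrelarank}$-decomposition, the same appeal to choice for well-orderings of the individual layers ${\revrelarank}_{\alpha}$, the same rank-first lexicographic gluing, and the same least-rank/least-in-layer argument for well-foundedness. The only divergence is presentational --- the paper assembles the very same relation $W$ by transfinite recursion on ${\rangerelarank}$ and takes $W = \bigcup {\range}{\extendwell}$, whereas you write $W$ down directly by the lexicographic formula, avoiding the recursion-theorem machinery without changing the mathematics.
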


Actually Theorem \ref{partial_well_to_well} also applies to a well-founded structure
because the well-founded relation can be first extended to a partial well ordering:
\begin{lemma}\label{well_founded_to_partial_well}
If $\langle A, R \rangle$ is a well-founded structure, then $R$ can be extended to a partial well ordering on $A$.
\end{lemma}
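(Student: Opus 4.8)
The plan is to take $W$ to be the transitive closure of $R$, that is, $x \mathbin{W} y$ holds precisely when there is a finite chain $x = x_0 \mathbin{R} x_1 \mathbin{R} \cdots \mathbin{R} x_n = y$ with $n \geq 1$. By construction $R \subseteq W$ (chains of length one), and $W$ is transitive. Hence, by Definition~\ref{def_partial_well}, the only thing left to verify is that $W$ is well founded; its irreflexivity then comes for free, exactly as the excerpt already observes for any well-founded relation. So the entire problem reduces to the claim that the transitive closure of a well-founded relation is again well founded.

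The key tool is the rank function {\relarank}. Since $R$ is well founded, {\relarank} is defined on all of $A$, and by Lemma~\ref{relarank_property}(c) the value ${\relarank}(t)$ is an ordinal for every $t \in A$. First I would establish that {\relarank} is strictly monotone along $W$: if $x \mathbin{W} y$ via a chain $x = x_0 \mathbin{R} \cdots \mathbin{R} x_n = y$, then Lemma~\ref{relarank_property}(a) gives ${\relarank}(x_0) \in {\relarank}(x_1) \in \cdots \in {\relarank}(x_n)$, and because these are ordinals, $\in$ is transitive on them, so ${\relarank}(x) \in {\relarank}(y)$, i.e. ${\relarank}(x) < {\relarank}(y)$. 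Thus {\relarank} sends every $W$-related pair to a strictly $<$-related pair of ordinals.

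With monotonicity in hand, well-foundedness of $W$ follows immediately: given any nonempty $B \subseteq A$, the set $\{{\relarank}(t) \mid t \in B\}$ is a nonempty set of ordinals and so has a least element $\alpha$; choosing $t \in B$ with ${\relarank}(t) = \alpha$, no $x \in B$ can satisfy $x \mathbin{W} t$, since that would force ${\relarank}(x) < \alpha$, contradicting the minimality of $\alpha$. Hence $t$ is $W$-minimal in $B$, which shows $W$ is well founded and therefore a partial well ordering extending $R$.

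I expect the main obstacle to be exactly this well-foundedness step, and in particular avoiding the tempting but incorrect shortcut of taking an $R$-minimal element $t$ of $B$ and hoping it is automatically $W$-minimal: it need not be, because a $W$-predecessor of $t$ lying in $B$ may reach $t$ only through intermediate elements outside $B$. Routing the argument through {\relarank} sidesteps this difficulty cleanly. One could alternatively concatenate the finite $R$-chains witnessing a putative infinite $W$-descending chain into a single infinite $R$-descending chain and invoke Lemma~\ref{partial_well_property}; but that phrasing leans on the descending-chain characterization, whereas the rank argument is more direct and self-contained given the material already developed.
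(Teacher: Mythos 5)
Your overall plan coincides with the paper's: the paper's proof is a one-line appeal to the transitive extension $R^{t}$ being a partial well ordering (citing Enderton), so taking $W$ to be the transitive closure, noting $R \subseteq W$ and transitivity, and observing that irreflexivity comes for free are all in order. The gap is in your verification that $W$ is well founded. You invoke \lemref{relarank_property}(c) for a merely well-founded $R$, but that lemma is not available there: its proof uses transitivity of $R$ explicitly (``Because $R$ is a transitive relation\dots''), i.e.\ \secref{sec:decomposition} operates under the standing assumption that $R$ is already a partial well ordering, and the statement is in fact false without transitivity. Concretely, take $A = \{a,b,c\}$ and $R = \{\langle a,b\rangle, \langle b,c\rangle\}$, which is well founded but not transitive. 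Then ${\relarank}(a) = \varnothing = 0$, ${\relarank}(b) = \{0\} = 1$, and ${\relarank}(c) = \{1\}$, and $\{1\}$ is not an ordinal (it is not a transitive set). The same example defeats your monotonicity step: $a\,W\,c$, and the membership chain ${\relarank}(a) \in {\relarank}(b) \in {\relarank}(c)$ holds (that half of \lemref{relarank_property}(a) needs no transitivity), but ${\relarank}(a) \notin {\relarank}(c)$ --- collapsing a chain of memberships requires the intermediate values to be transitive sets, which is precisely what fails. Your final step, choosing a least element of $\{{\relarank}(t) \mid t \in B\}$, likewise presupposes these values are ordinals.

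The repair is cheap, and two routes are open. Either replace the paper's rank by the standard one, $\rho(t) = \bigcup \{\rho(x)^{+} \mid x\,R\,t\}$, which is an ordinal for every well-founded $R$ and satisfies $x\,R\,y \Rightarrow \rho(x) \in \rho(y)$; transitivity of $\in$ on ordinals then gives $x\,W\,y \Rightarrow \rho(x) \in \rho(y)$, and your minimality argument goes through verbatim. Or promote the alternative you mentioned and set aside: an infinite $W$-descending sequence expands, by concatenating the witnessing finite $R$-chains, into an infinite $R$-descending sequence, and a well-founded relation admits no such sequence --- the set of its terms would have no $R$-minimal element --- a direction that needs neither transitivity nor choice. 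You were right to flag and avoid the bogus shortcut of hoping an $R$-minimal element of $B$ is $W$-minimal; but as written, the proof leans on a property of ${\relarank}$ that holds only for transitive $R$, which is exactly the hypothesis the lemma is trying to do without.
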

\begin{proof}
$R$'s transitive extension $R^t$ is a partial well ordering.
Please refer to \cite{Enderton77} for details of this well-known result.
\end{proof}

Clearly if either $A = \varnothing$ or $R = \varnothing$, the extension is trivial by Well-Ordering Theorem.
We assume that both $A$ and $R$ are not empty.
The idea is to linearly extend elements of $A$ from different ${\revrelarank}_{\alpha}$ in ascending order,
and then well extend those in the same ${\revrelarank}_{\alpha}$:
\begin{enumerate}
	\item
	Suppose $x \in {\revrelarank}_{\alpha}, y \in {\revrelarank}_{\beta} \text{ and } x \neq y$.
	\item
	if $\alpha \in \beta$, add $\langle x, y \rangle$ to $W$.
	
	\item
	if $\alpha \ni \beta$, add $\langle y, x \rangle$ to $W$.
	
	\item
	if $\alpha = \beta$, then $x$ and $y$ are incomparable.
	By Well-Ordering Theorem, there exists a well ordering $\prec_{{\revrelarank}_{\alpha}}$ on the set ${\revrelarank}_{\alpha}$,
	and add either $\langle x, y \rangle$ to $W$ if $x \prec_{{\revrelarank}_{\alpha}} y$,
	or $\langle y, x \rangle$ if $y \prec_{{\revrelarank}_{\alpha}} x$.
\end{enumerate}

Now we describe the algorithm formally. We first define
\begin{align*}
T_{1} = \{\langle B, \prec \rangle \ | \ (B \subseteq A) \land (\prec \, \text{is a well ordering on }B) \}
\end{align*}
$T_{1}$ is a set, because if $\langle B, \prec \rangle \in T_{1}$,
then $\langle B, \prec \rangle \in \mathcal{P}(A) \times \mathcal{P}(A \times A)$.
By Axiom of Choice, there exists a function ${\getwell} \subseteq T_{1}$ with dom {\getwell} = dom$\ T_{1}$ = $\mathcal{P}(A)$.
That is, ${\getwell}(B)$ is a well ordering on $B \subseteq A$.
{\getwell} is one-to-one too.

Next we enumerate ${\revrelarank}$-decompositions of $A$.
Let $\gamma_{2}(f, y)$ be the formula:
	\begin{enumerate}[(i)]
		\item
		If $f$ is a function with domain an ordinal $\alpha \in {\rangerelarank}$,
		$y = {\getwell}({\revrelarank}_{\alpha}) \ \cup \ ((\bigcup {\revrelarank} \lsem \alpha \rsem) \times {\revrelarank}_{\alpha})$.

		\item
		otherwise, $y = \varnothing$.
	\end{enumerate}

To be mentioned again, ${\revrelarank} \lsem \alpha \rsem = \{ {\revrelarank}_{\beta} \ | \ \beta \in \alpha \}$.
If $\gamma_{2}(f, y_1)$ and $\gamma_{2}(f, y_2)$,
it is obvious that $y_1 = y_2$.
Then transfinite recursion theorem schema on well-ordered structures
gives us a unique function {\extendwell} with domain ${\rangerelarank}$
such that $\gamma_{2}({\extendwell} \upharpoonright \text{seg} \ \alpha, {\extendwell}(\alpha))$
for all $\alpha \in {\rangerelarank}$.
Because seg $\alpha = \alpha$, we get $\gamma_{2}({\extendwell} \upharpoonright \alpha, {\extendwell}(\alpha))$.

We claim that:
\begin{lemma}\label{is_well}
$W = \bigcup {\range}{\extendwell}$ is a well ordering on $A$ extended from $R$.
\end{lemma}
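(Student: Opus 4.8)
The plan is to first read off from the transfinite recursion an explicit pointwise description of $W$, and then to verify in turn the three requirements: that $R \subseteq W$, that $W$ is a strict linear ordering of $A$, and that $W$ is well-founded. Writing $\alpha_x = \relarank(x)$ for brevity, I claim that for distinct $x,y \in A$ we have $\langle x,y\rangle \in W$ if and only if either $\alpha_x \in \alpha_y$, or $\alpha_x = \alpha_y$ and $\langle x,y\rangle \in \getwell(\revrelarank_{\alpha_x})$. Indeed $W = \bigcup \range \extendwell = \bigcup_{\alpha \in \rangerelarank} \extendwell(\alpha)$, and each $\extendwell(\alpha) = \getwell(\revrelarank_\alpha) \cup ((\bigcup \revrelarank\lsem\alpha\rsem)\times \revrelarank_\alpha)$ contributes exactly (i) the pairs ordered by $\getwell$ inside the single class $\revrelarank_\alpha$, and (ii) the pairs $\langle x,y\rangle$ with $\alpha_x < \alpha = \alpha_y$. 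Unioning over $\alpha$ yields precisely the displayed characterization, using that the $\revrelarank$-decomposition partitions $A$ so that every element lies in exactly one class $\revrelarank_{\alpha_x}$. The extension property is then immediate: if $x\,R\,y$ then $\relarank(x) \in \relarank(y)$ by Lemma~\ref{relarank_property}(a), and $x \neq y$ because $R$ is irreflexive (being well-founded), so $\langle x,y\rangle \in W$ by the characterization.

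For linearity I would argue as follows. Irreflexivity holds because $\alpha_x \in \alpha_x$ is impossible and each $\getwell(\revrelarank_\alpha)$ is a strict well ordering. Totality follows by splitting on whether $\alpha_x \in \alpha_y$, $\alpha_y \in \alpha_x$, or $\alpha_x = \alpha_y$, the last case being decided by the total ordering $\getwell(\revrelarank_{\alpha_x})$ on the common class. Transitivity I would check by cases on the ranks of $x,y,z$: whenever at least one of the two consecutive comparisons strictly increases the rank we conclude $\langle x,z\rangle \in W$ from transitivity of $\in$ on ordinals, and the sole remaining case $\alpha_x = \alpha_y = \alpha_z$ reduces to transitivity of the single well ordering $\getwell(\revrelarank_{\alpha_x})$. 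This case analysis is routine but is the most bookkeeping-heavy step.

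Finally, for well-foundedness let $S \subseteq A$ be nonempty. The set $\{\relarank(s) \mid s \in S\}$ is a nonempty set of ordinals, hence has a least element $\alpha_0$; then $S \cap \revrelarank_{\alpha_0}$ is nonempty, and since $\getwell(\revrelarank_{\alpha_0})$ is a well ordering it has a $\getwell(\revrelarank_{\alpha_0})$-least element $s_0$, which I claim is $W$-least in $S$: any other $s \in S$ satisfies $\relarank(s) \supseteq \alpha_0$, and if $\relarank(s) \ni \alpha_0$ then $s_0\,W\,s$ by rank comparison, while if $\relarank(s) = \alpha_0$ then $s \in \revrelarank_{\alpha_0}$ and $s_0 \mathrel{\getwell(\revrelarank_{\alpha_0})} s$, whence again $s_0\,W\,s$. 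The main obstacle is not conceptual but rather making the pointwise characterization of $W$ airtight, namely confirming that the union $\bigcup \range \extendwell$ introduces no unintended pairs and that the partition property of the $\revrelarank$-decomposition guarantees each element a well-defined rank class; once that is in hand, extension, linearity, and well-foundedness all follow from the clean ``compare ranks first, then break ties with $\getwell$'' picture.
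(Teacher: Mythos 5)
Your proposal is correct and follows essentially the same route as the paper's own proof: compare $\relarank$-ranks first, break ties within each class $\revrelarank_\alpha$ via $\getwell(\revrelarank_\alpha)$, and obtain well-foundedness by taking the least ordinal in $\relarank\lsem B\rsem$ and then the $\getwell$-least element of the corresponding intersection. The only difference is presentational---you first isolate an explicit pointwise characterization of $W$ before verifying extension, linearity, and well-foundedness, whereas the paper argues directly from the construction---but the substance of every step matches.
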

\begin{proof}
Suppose $x \in {\revrelarank}_{\alpha}, y \in {\revrelarank}_{\beta} \text{ and } z \in {\revrelarank}_{\theta}$
in which $\alpha, \beta, \theta \in {\rangerelarank}$.
\begin{enumerate}
	\item
	\begin{align*}
	\langle x, y \rangle \in R \ &\Rightarrow \ \alpha \in \beta\\
	&\Rightarrow \ \langle x, y \rangle \in (\bigcup {\revrelarank} \lsem \beta \rsem) \times {\revrelarank}_{\beta}\\
	&\Rightarrow \ \langle x, y \rangle \in {\extendwell}(\beta)\\
	&\Rightarrow \ \langle x, y \rangle \in W
	\end{align*}	
	Therefore $R \subseteq W$.
	
	\item
	There are three possible relations between $\alpha$ and $\beta$:
	\begin{enumerate}[(i)]
		\item
		$\alpha \in \beta$, then $x \neq y$ and $x\,W\,y$ according to the construction of $W$.
		
		\item
		$\alpha \ni \beta$, then $x \neq y$ and $y\,W\,x$.
		
		\item
		$\alpha = \beta$. Let $\prec_{{\revrelarank}_{\alpha}} \ = {\getwell}({\revrelarank}_{\alpha})$,
		then $x = y$, $x \prec_{{\revrelarank}_{\alpha}} y$, or $y \prec_{{\revrelarank}_{\alpha}} x$.
		This implies that $x = y$, $x\,W\,y$, or $y\,W\,x$.
	\end{enumerate}
		
	Furthermore suppose $x\,W\,y$ and $y\,W\,z$,
	then $\alpha \ \underline{\in} \ \beta \ \underline{\in} \ \theta$.
	If $\alpha \in \theta$, then $x\,W\,z$.
	Otherwise, $\alpha = \beta = \theta$.
	Let $\prec_{{\revrelarank}_{\alpha}} \ = {\getwell}({\revrelarank}_{\alpha})$, then $x \prec_{{\revrelarank}_{\alpha}} y$
	and $y \prec_{{\revrelarank}_{\alpha}} z$.
	Because $\prec_{{\revrelarank}_{\alpha}}$ is a well ordering, then $x \prec_{{\revrelarank}_{\alpha}} z$ and $x\,W\,z$.
	
	From the above, $W$ satisfies trichotomy on $A$ and is transitive,
	therefore $W$ is a linear ordering.
	
	\item
	Suppose $B$ is a nonempty subset of $A$,
	then ${\relarank} \lsem B \rsem$ is a nonempty set of ordinals by Axiom of Replacement.
	Such a set has a least element $\sigma$.
	Let $C = B \cap {\revrelarank}_{\sigma}$ and $\prec_{{\revrelarank}_{\sigma}} \ = {\getwell}({\revrelarank}_{\sigma})$.
	$C$ is a nonempty subset of ${\revrelarank}_{\sigma}$, so it has a least element $\hat{t}$ under $\prec_{{\revrelarank}_{\sigma}}$.
	For any $x$ in $B$ other than $\hat{t}$, either $\sigma \in \alpha$ or $\sigma = \alpha$.
	In both cases, $\hat{t}\,W\,x$ and $\hat{t}$ is indeed the least element of $B$.
\end{enumerate}
\end{proof}

Finally we conclude that an arbitrary well-founded or partially well-ordered structure can be extended to a well-ordered structure.

\section{Linear Extension Coincides Well Extension?}\label{sec:linear_coincide_well}
As mentioned earlier, any partial ordering can be linearly extended by Order-Extension Principle~\cite{Szpilrajn30}.
Is it possible that $\langle A, R \rangle$ can be always extended to a well-ordered structure?
Here is the result:
\begin{theorem}\label{every_linear_extension}
	A partially ordered structure $\langle A, R \rangle$ is partially well-ordered with no infinite totally unordered subset
	under $R$ if and only if every linear extension of $\langle A, R \rangle$ is well-ordered.
\end{theorem}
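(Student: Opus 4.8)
The plan is to prove both directions separately; the forward direction rests on an infinite Ramsey argument, while the converse proceeds by contraposition, constructing an explicit pathological linear extension.

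For the forward direction, assume $\langle A, R\rangle$ is partially well-ordered with no infinite totally unordered subset, and suppose toward a contradiction that some linear extension $L$ (with $R \subseteq L$) fails to be well-ordered. In a linear ordering, failing to be well-ordered is equivalent, via Dependent Choice, to the existence of an infinite $L$-descending sequence $x_0, x_1, x_2, \dots$ with $x_{n+1}\,L\,x_n$ for every $n \in \upomega$. I would then two-colour the unordered pairs drawn from $\{x_n : n \in \upomega\}$ according to whether the two elements are $R$-comparable or $R$-incomparable, and apply the infinite Ramsey theorem to extract an infinite monochromatic subsequence $x_{n_0}, x_{n_1}, \dots$. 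If every pair in the subsequence is $R$-comparable, then since $R \subseteq L$ the $R$-comparison must agree with the $L$-order, forcing $x_{n_{k+1}}\,R\,x_{n_k}$ for all $k$; this is a descending chain under $R$, contradicting \lemref{partial_well_property}(b). If instead every pair is $R$-incomparable, the subsequence is an infinite totally unordered subset, contradicting the hypothesis. Either way a contradiction results, so every linear extension is well-ordered.

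For the converse I would argue by contraposition: assume $\langle A, R\rangle$ is either not partially well-ordered or possesses an infinite totally unordered subset, and produce a linear extension that is not well-ordered. If $R$ is not well-founded, \lemref{partial_well_property}(b) supplies a descending chain $x_{n+1}\,R\,x_n$; since $R \subseteq L$ for every linear extension $L$, this same chain is $L$-descending, so no linear extension is well-ordered. Otherwise, fix a countably infinite totally unordered subset $B = \{b_0, b_1, b_2, \dots\}$ and let $N = \{\langle b_j, b_i\rangle : i < j\}$ be the reverse-index ordering on $B$. I would form $R \cup N$, take its transitive closure, extend the result to a linear order $L$ by the Order-Extension Principle \cite{Szpilrajn30}, and observe that $L \supseteq N$ makes $b_0, b_1, b_2, \dots$ an infinite $L$-descending sequence, so $L$ is not well-ordered.

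The crux, and the step I expect to be the main obstacle, is verifying that the transitive closure of $R \cup N$ is again a strict partial order, i.e. that reversing the antichain $B$ introduces no cycles. Here I would use crucially that $B$ is totally unordered, so no $R$-edge joins two distinct elements of $B$. Contracting any closed walk in $R \cup N$ along maximal runs of a single edge type (legitimate since $R$ and $N$ are each transitive) yields a walk alternating between single $R$-edges and single $N$-edges. Each $N$-edge has both endpoints in $B$, so in an alternating closed walk every intervening $R$-run would begin and end in $B$, forcing $b_i\,R\,b_j$ for some $i \neq j$ and contradicting total unorderedness. Pure $R$-cycles and pure $N$-cycles are excluded because $R$ and $N$ are themselves strict partial orders. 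Hence $R \cup N$ is acyclic and its transitive closure is the required strict partial order, completing the construction.
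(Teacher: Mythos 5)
Your proposal is correct, but it genuinely diverges from the paper's argument in both directions. For the forward direction the paper never invokes Ramsey's theorem: from the $L$-descending sequence $s$ it constructs, by recursion with a choice function over the ${\revrelarank}$-decomposition of \secref{sec:decomposition}, a subsequence ${\increasingsequence}$ of $s$ along which the ${\relarank}$-ranks are non-decreasing; then for $j < k$ one has ${\increasingsequence}(k)\,L\,{\increasingsequence}(j)$, and $R$-comparability would force ${\increasingsequence}(k)\,R\,{\increasingsequence}(j)$, hence a strict \emph{drop} in rank, so ${\range}{\increasingsequence}$ is already an infinite totally unordered set. Your two-colouring of pairs reaches the same contradiction (with the comparable colour contradicting \lemref{partial_well_property}(b)) and buys independence from the rank machinery of \secref{sec:decomposition}, at the price of importing the infinite Ramsey theorem; the paper's route stays inside its own decomposition framework. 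For the converse, the care you spend on your ``crux'' is not wasted --- it repairs an actual gap: the paper orders a countable antichain $D$ with the order type of $\mathbb{Z}$ and asserts that $<_{D} \cup\, R$ is a partial ordering because the two relations are disjoint, but this is false as stated, since transitivity of the union can fail (take incomparable $d_{1}, d_{2} \in D$ with $d_{1} <_{D} d_{2}$, $d_{2}\,R\,c$ for some $c \notin D$, and $\neg(d_{1}\,R\,c)$: then $d_{1} <_{D} d_{2}\,R\,c$ yet $d_{1}$ bears neither relation to $c$), so Szpilrajn's principle cannot be applied verbatim. Your transitive-closure construction, with the alternating-walk argument showing acyclicity because no $R$-edge joins two distinct elements of the antichain, is exactly the missing justification; apart from your use of order type $\upomega^{*}$ where the paper uses $\mathbb{Z}$ (either suffices to kill well-ordering), the two constructions then coincide.
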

\begin{proof}
	Let $\langle A, L \rangle$ be an arbitrary linear extension of $\langle A, R \rangle$,
	and $<$ be the normal ordering on the set of natural numbers $\upomega$.
	\leavevmode
	\begin{enumerate}
		\item
		The "only if" part.
		Suppose $\langle A, L \rangle$ is not well-ordered,
		then there is an infinite sequence $s = \langle x_{n} : n \in \upomega \rangle$ in $A$
		(a function $f :\upomega \rightarrow A$)
		for which $x_{n^{+}} \, L \, x_{n}$ for all $n \in \upomega$.
		\begin{enumerate}[(i)]
			\item
			Clearly $A$ is an infinite set.			
			And elements in $s$ are distinct and {\range}$s$ is infinite.
			Otherwise there exists $x \in A$ such that $\langle x, x_{i_{1}} \cdots, x_{i_{k}}, x \rangle$ is a sub-sequence of $s$
			, which contradicts the fact that $L$ is irreflective.
			
			\item
			Let
			\begin{align*}
			T_{2} = \{ S_{\alpha} = {\revrelarank}_{\alpha} \cap {\range}s \ | \ (\alpha \in {\rangerelarank}) \land (S_{\alpha} \neq \emptyset) \}
			\end{align*}
			
			$T_{2}$ is a partition of {\range}$s$.
			By Axiom of Choice, there is a choice function $\getchoicea$ defined on $T_{2}$ such that $\getchoicea(\alpha) \in S_{\alpha}$.
			
			Let $e$ be an extraneous object not belonging to ${\range}s$.
			We define a function ${\getleast} : {\range}s \rightarrow {\range}s \cup \{e\}$ such that for any $B \subseteq {\range}s$:
			\begin{align*}
				{\getleast}(B) = \begin{cases} {\getchoicea}(\text{the least ordinal of }{\relarank} \lsem B \rsem), & \mbox{if } B \neq \emptyset \\ e, & \mbox{if } B = \emptyset \end{cases}
			\end{align*}
			
			${\getleast}$ does exist, because if $B$ is nonempty
			then ${\relarank} \lsem B \rsem$ is a nonempty set of ordinals by Axiom of Replacement.
			Such a set does have a least ordinal.
				
			\item
			Then we define by recursion a function ${\increasingsequence}$ from $\upomega$ into ${\range}s \cup \{e\}$:
			\begin{align*}
				{\increasingsequence}(0) &= {\getleast}({\range}s)\\
				{\increasingsequence}(n^{+}) &= {\getleast}(\{ x \ | \ (x \in {\range}s) \land (x \, L \, {\increasingsequence}(n)) \})
			\end{align*}
			
			${\increasingsequence}(n^{+}) \in {\range}s$ for each $n \in \upomega$
			because the set $\{ x \ | \ (x \in {\range}s) \land (x \, L \, {\increasingsequence}(n)) \}$ will always be infinite.
			Therefore ${\increasingsequence}$ is an infinite sub-sequence of $s$ and ${\relarank}({\increasingsequence}(n)) \ \underline{\in} \ {\relarank}({\increasingsequence}(n^{+}))$ for each $n \in \upomega$.
			
			\item
			Now we prove that ${\range}{\increasingsequence}$ is an infinite totally unordered subset of $A$.
			For two distinct $j, k \in \upomega$, let $j < k$ without loss of generality.
			Because ${\increasingsequence}(k) \, L \, {\increasingsequence}(j)$,
			either both ${\increasingsequence}(k)$ and ${\increasingsequence}(j)$ are incomparable,
			or ${\increasingsequence}(k) \, R \, {\increasingsequence}(j)$ as $L$ is the linear extension of $R$.
			The latter is impossible since ${\relarank}({\increasingsequence}(j)) \ \underline{\in} \ {\relarank}({\increasingsequence}(k))$.
		\end{enumerate}
		The above contradiction implies that $\langle A, L \rangle$ must be a well-ordered structure.

		\item
		The "if" part.
		\begin{enumerate}[(i)]
			\item
			$R$ is well-founded. Otherwise, $\langle A, R \rangle$ must have a descending chain
			$s = \langle x_{n} : n \in \upomega \rangle$ in $A$
			for which $x_{n^{+}} \, R \, x_{n}$.
			Because $L$ is the linear extension of $R$, $s$ also satisfies that $x_{n^{+}} \, L \, x_{n}$ for all $n \in \upomega$.
			Then $\langle A, L \rangle$ has a descending chain, and it could not be well-ordered.
			
			\item
			$A$ has no infinite totally unordered subsets under $R$.
			Otherwise, $A$ must have a countably infinite totally unordered subset $D$ under $R$.
			Let $f$ be the one-to-one function from $D$ onto the set of integers $\mathbb{Z}$,
			and $<_{\mathbb{Z}}$ be the normal ordering on $\mathbb{Z}$.
			We induce a linear ordering $<_{D}$ on $D$~\cite{Enderton77} by:
			\begin{align*}
			x <_{D} y \Leftrightarrow f(x) <_{\mathbb{Z}} f(y)
			\end{align*}
			$<_{D} \cup \, R$ is a partial ordering on $A$,
			since $<_{D}$ is a partial ordering disjointing with $R$.
			Then by Order-Extension Principle~\cite{Szpilrajn30}
			$<_{D} \cup \, R$ can be linearly extended to $L^\prime$, which is evidently one linear extension of $R$.
			$L^\prime$ is however not a well ordering, otherwise $<_{D}$ will be a well ordering on $D$ which is obviously false.
		\end{enumerate}
	\end{enumerate}
\end{proof}

The "if" part of Theorem \ref{every_linear_extension} is an existence proof.
In the following we take a countably infinite binary tree as an example
to illustrate how to construct a non-well linear extension.
The idea is to linearly extend such a tree by making the left subtree of each node \textit{greater} than its right subtree.

To be more precise, let $<$ be the normal ordering on the set of natural numbers $\upomega$,
and $R_{1} = \{\langle n, 2 \times n + 1 \rangle,  \langle n, 2 \times n + 2 \rangle \ | \ n \in \upomega \}$.
$\langle \upomega, R_{1} \rangle$ is a well-founded structure since $R_{1} \subseteq \ <$.
Let $R$ be the transitive extension of $R_{1}$,
then the partially well-ordered structure $\langle \upomega, R \rangle$
is the above mentioned countably infinite binary tree with the following properties:
\leavevmode
\begin{enumerate}[(a)]		
	\item
	$x\,R\,y \Rightarrow \exists z_{1}, z_{2}, \cdots, z_{n} \in \upomega \land x\,R_{1}\,z_{1}\,R_{1}\,z_{2}\,R_{1}\,\cdots\,R_{1}\,z_{n}\,R_{1}\,y$
	
	\item
	$R \subseteq \, <$
	
	\item
	${\rangerelarank} = {\range}{\relarank} = \upomega$
	
	\item
	${\revrelarank}_{n} = \{2^n - 1, 2^n, \cdots, 2^{n+1} - 2\}$ for all $n \in \upomega$,
	and card ${\revrelarank}_{n} = 2^n \in \upomega$.
	
	\item
	$\langle \upomega, R \rangle$ has infinite totally unordered subsets under $R$.
	Actually, $\{2^{n + 2} - 3 \ | \ n \in \upomega\}$ is one.
\end{enumerate}

We define the following function for each "node" to get its \textit{descendants}:
\begin{align*}
{\getdescendant} = \{\langle x, B \rangle \ | \ (x \in \upomega) \land (B \subseteq \upomega) \land (y \in B \Leftrightarrow x\,R\,y)\}
\end{align*}
{\getdescendant} is a function from $\upomega$ into $\mathcal{P}(\upomega)$,
because it is a subset of $\upomega \times \mathcal{P}(\upomega)$ and is single rooted.

Let $\gamma_{3}(f, y)$ be the formula:
\begin{enumerate}[(i)]
	\item
	$f$ is a function with domain a natural number $n \in \upomega$.
	Denote ${\revrelarank}_{n}$ as $\{x_{1}, x_{2}, \cdots, x_{2^n}\}$
	for which $x_{1} < x_{2} < \cdots < x_{2^n}$(they are totally unordered under $R$).
	Then $y = \bigcup\limits_{1 \leq i < j \leq 2^n}({\getdescendant}(x_{j}) \times {\getdescendant}(x_{i}))$
	
	\item
	otherwise, $y = \varnothing$.
\end{enumerate}

Transfinite recursion theorem schema gives us a unique function {\extendlinear}
with domain $\upomega$ such that $\gamma_{3}({\extendlinear} \upharpoonright \text{seg} \ n, {\extendlinear}(n))$
for all $n \in \upomega$.
That is, $\gamma_{3}({\extendlinear} \upharpoonright n, {\extendlinear}(n))$.
Then $L = (\bigcup {\range}{\extendlinear}) \cup R$ is a linear extension of $R$.
The proof is straightforward, and we omit the details here.
Let $s = \langle x_{n} = 2^{n + 2} - 3 : n \in \upomega \rangle$.
It is easy to verify that $x_{n^{+}}\,L\,x_{n}$ for all $n \in \upomega$.
Therefore $s$ is a descending chain and $L$ cannot be a well ordering on $\upomega$.

\end{document}